\documentclass[conference,letterpaper]{IEEEtran}

\usepackage[utf8]{inputenc} 
\usepackage[T1]{fontenc}
\usepackage{url}
\usepackage{ifthen}
\usepackage{cite}
\usepackage[cmex10]{amsmath} 
\usepackage{algorithmic}
\usepackage{algorithm}
\usepackage{graphicx}
\usepackage{cite}
\usepackage{array}
\usepackage{amsmath,amsfonts}
\usepackage{amssymb}
\usepackage{microtype}
\usepackage{comment}
\usepackage{bm}
\newtheorem{theorem}{Theorem}
\newtheorem{definition}{Definition}   
\newtheorem{lemma}{Lemma}      
\newtheorem{corollary}{Corollary}    
\newtheorem{example}{Example}       
\newtheorem{remark}{Remark}

\def\BibTeX{{\rm B\kern-.05em{\sc i\kern-.025em b}\kern-.08em
    T\kern-.1667em\lower.7ex\hbox{E}\kern-.125emX}}

\IEEEoverridecommandlockouts
\begin{document}
\bstctlcite{BSTcontrol}

\title{ Existence and Counting Bounds for High-Memory Spatially-Coupled Codes via the Combinatorial Nullstellensatz 
\thanks{This work is partially supported by the National Key R\&D Program of China, (2023YFA1009600).}}

 \author{
 \IEEEauthorblockN{Lei Huang}
 \IEEEauthorblockA{Data Science Institute\\
                    Shandong University\\
                    Jinan, China\\
                    Email: leihuang@mail.sdu.edu.cn}}

\maketitle

\begin{abstract}

The finite-length performance of spatially-coupled low-density parity-check (SC-LDPC) codes is strongly affected by short cycle configurations and the harmful structures induced by them. 
This paper studies SC-LDPC code design directly at the protograph level, where the design variables are the edge-spreading assignments specified by the partition matrix. 
In contrast to CLLL/Moser--Tardos based constructive frameworks for QC-SC-LDPC codes, we focus on sharper nonconstructive existence and counting bounds. 
By encoding cycle-activation conditions as polynomial vanishing constraints over finite grids, we apply the Combinatorial Nullstellensatz to derive sufficient memory conditions for eliminating prescribed cycle-induced harmful structures. 
For fully connected $(\gamma,\kappa)$ base graphs, the resulting bounds explicitly characterize the memory required to destroy all $4$-cycles as well as all $4$- and $6$-cycles, and for fixed $\gamma$, they are asymptotically tight up to a constant factor compared with known lower bounds.
We further apply the Alon--Füredi theorem to obtain lower bounds on the number of feasible edge-spreading assignments, including an explicit counting bound for assignments that eliminate all $4$-cycles and hence yield girth at least six. 
These results provide a refined algebraic-combinatorial characterization of the feasible design space for high-memory SC-LDPC codes, although no corresponding construction algorithm is provided.
\end{abstract}

\section{Introduction}

The threshold-saturation effect of spatially-coupled low-density parity-check (SC-LDPC) codes \cite{r1,r2,r3}, together with their capacity-achieving behavior over general binary memoryless symmetric channels \cite{r4}, has established SC-LDPC codes as a powerful family of modern channel codes \cite{r5,r6,r7,r8,r9,r10,r11,r12}. 
Their convolutional structure further enables windowed decoding (WD), which provides low-latency iterative decoding with controllable complexity \cite{r13,r14,r15}. 
Recent studies have shifted attention from purely asymptotic threshold analysis to finite-length and implementation-oriented designs, such as optimized edge spreading, 6G-oriented rate-compatible constructions, neural or adaptive WD, and multi-dimensional coupling \cite{r16,r17,r18,r19,r20,r21,r22,r23,r24,r25,r26,r27,r28,r29,r30}. 
Meanwhile, progress in quantum LDPC codes and fault-tolerant quantum memories has inspired analogous coupled constructions in the quantum setting, where SC-QLDPC codes have been proposed as quantum counterparts and generalizations of classical multi-dimensional SC codes \cite{r31,r32}.

Although spatial coupling provides excellent asymptotic decoding thresholds, the finite-length performance of SC-LDPC codes is still governed by small harmful substructures in the Tanner graph, especially in the error-floor region. 
For applications requiring extremely low residual error probabilities, it is therefore not enough to know whether a good edge-spreading assignment exists; one also needs to understand how large the feasible design space is. 
In particular, lower-bounding the number of partition matrices that eliminate prescribed short-cycle structures provides a quantitative measure of design feasibility and offers guidance for practical code search. 
This motivates our study of both existence and counting bounds for structure-avoiding  SC-LDPC constructions.

This work complements our recent work \cite{r33}, where a CLLL/MT-type framework provided constructive and counting guarantees for structure-avoiding QC-SC-LDPC designs. 
In contrast, the present paper removes the lifting step and studies only protograph-level edge spreading. 
Using the Combinatorial Nullstellensatz and the Alon--Füredi theorem, we derive sharper nonconstructive existence and counting bounds for eliminating prescribed short-cycle structures. 
For fully connected base graphs, the obtained memory bounds for destroying all $4$-cycles, and all $4$- and $6$-cycles, are asymptotically tight up to a constant factor, although no corresponding construction algorithm is provided. A detailed comparison between the present Nullstellensatz bounds and the CLLL/MT bounds of~\cite{r33}, including the $Z=1$ specialization, is deferred to the same-titled preprint version of this work.
\section{PRELIMINARIES}
\subsection{SC-LDPC Codes}
In this section, we introduce the construction of Type-I SC-LDPC codes from a regular LDPC base matrix. We start with an all-one $(\gamma,\kappa)$ base matrix $\mathbf{H}$, which represents a fully connected bipartite base graph with $\gamma$ check-node types and $\kappa$ variable-node types. 
An SC-LDPC code is obtained by coupling a sequence of identical block-code sections into a spatial chain. 
From the matrix perspective, this coupling operation is commonly referred to as \textit{edge-spreading}. 
Specifically, the nonzero entries of the base matrix $\mathbf{H}$ are divided into $m+1$ component matrices $
\{\mathbf{H}_0,\mathbf{H}_1,\ldots,\mathbf{H}_m\},
$
where each $\mathbf{H}_l$ has the same size as $\mathbf{H}$ and satisfies
\[
\mathbf{H}=\sum_{l=0}^{m}\mathbf{H}_l .
\]
Here, $m$ is called the \textit{memory} of the coupled code.

The edge-spreading pattern is recorded by a $(\gamma,\kappa)$ partition matrix $\mathbf{P}$. 
For $i\in[\gamma]$ and $j\in[\kappa]$, where $[\gamma]=\{0,1,\ldots,\gamma-1\}$ and $[\kappa]=\{0,1,\ldots,\kappa-1\}$, the entry $\mathbf{P}(i,j)=l$ with $0\le l\le m$ indicates that the edge corresponding to $\mathbf{H}(i,j)$ is assigned to the $l$-th component matrix, namely
\[
\mathbf{H}_l(i,j)=1,\qquad 
\mathbf{H}_{l'}(i,j)=0,\quad \forall l'\neq l .
\]
In general, different spatial positions may use different base matrices or different edge-spreading patterns. 
However, to reduce structural complexity and facilitate a regular coupled construction, we assume that all spatial positions share the same base matrix and the same partition matrix throughout this paper.

Given a coupling length $L$, the terminated SC-LDPC parity-check matrix is constructed by arranging the component matrices along a diagonal band as
\[
\mathbf{H}_{\mathrm{SC}}=
\begin{bmatrix}
\mathbf{H}_0      & \mathbf{0}        & \cdots       & \mathbf{0} \\
\mathbf{H}_1      & \mathbf{H}_0      & \ddots       & \vdots \\
\vdots            & \mathbf{H}_1      & \ddots       & \mathbf{0} \\
\mathbf{H}_m      & \vdots            & \ddots       & \mathbf{H}_0 \\
\mathbf{0}        & \mathbf{H}_m      & \ddots       & \mathbf{H}_1 \\
\vdots            & \ddots            & \ddots       & \vdots \\
\mathbf{0}        & \cdots            & \mathbf{0}   & \mathbf{H}_m
\end{bmatrix}.
\]
Equivalently, the block at block-row $t$ and block-column $s$ is given by $\mathbf{H}_{t-s}$ if $0\le t-s\le m$, and by the all-zero matrix otherwise. 
Thus, $\mathbf{H}_{\mathrm{SC}}$ has size $(L+m)\gamma \times L\kappa$. 
Under this notation, the considered Type-I SC-LDPC code is characterized by the parameter tuple
$(\gamma,\kappa,m,L)$
together with the partition matrix $\mathbf{P}$.

\subsection{Major Harmful Structures}

Although SC-LDPC codes improve the iterative-decoding performance of their uncoupled counterparts, they may still exhibit an error floor at high SNRs over the AWGN channel. 
This degradation is mainly caused by harmful small structures in the Tanner graph.
We recall the relevant definitions below.

\begin{lemma}\label{lemma_1}
Let $\mathcal{C}_{2g}$ be the set of cycle-2g candidates in the base matrix, where $g \in \mathbb{N},~g \geq 2$. Denote $c_{2g} \in \mathcal{C}_{2g}$ by $(j_1, i_1, j_2, i_2, \dots, j_g, i_g)$, where $(i_k, j_k)$ and $(i_k, j_{k+1})$ for $1 \leq k \leq g$, with $j_{g+1} = j_1$, are nodes of $c_{2g}$ in $\mathbf{H}$, $\mathbf{P}$. Then $c_{2g}$ becomes a cycle candidate in the protograph after the edge-spreading operation if and only if the following condition holds \cite{r34}:

\begin{align}\label{e_1}
\sum_{k=1}^{g} \mathbf{P}(i_k, j_k) = \sum_{k=1}^{g} \mathbf{P}(i_k, j_{k+1}). 
\end{align}

\end{lemma}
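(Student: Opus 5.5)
The plan is to trace the cycle candidate $c_{2g}$ through the coupled Tanner graph defined by $\mathbf{H}_{\mathrm{SC}}$ and record the position offsets it picks up along the way. In the coupled graph, variable node $j$ at spatial position $s$ (block column $s$) is joined to check node $i$ at position $t$ exactly when $t-s=\mathbf{P}(i,j)$, because $\mathbf{H}_{t-s}(i,j)=1$ precisely for $t-s=\mathbf{P}(i,j)$. Each base-matrix edge $(i,j)$ therefore lifts to a family of edges, one per position, that all shift the position index by the fixed amount $\mathbf{P}(i,j)$ when moving from the variable side to the check side. A cycle candidate in the protograph is then a closed walk in the coupled graph whose projection onto the base graph is $c_{2g}$.

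Step one is to start at variable node $j_1$ at some position $s_1$ and follow $c_{2g}$ in order. Going $j_k \to i_k$ along edge $(i_k,j_k)$ moves to check position $t_k=s_k+\mathbf{P}(i_k,j_k)$. Going $i_k\to j_{k+1}$ along edge $(i_k,j_{k+1})$ moves to variable position $s_{k+1}=t_k-\mathbf{P}(i_k,j_{k+1})$. Each step is forced once the current position is fixed, since the edge of a given type incident to a given node is unique in the band structure. Telescoping over $k=1,\dots,g$ gives
\[
s_{g+1}-s_1=\sum_{k=1}^{g}\mathbf{P}(i_k,j_k)-\sum_{k=1}^{g}\mathbf{P}(i_k,j_{k+1}).
\]

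Step two settles both directions. The lifted walk closes, that is, returns to $j_1$ at position $s_1$, if and only if this difference is zero, which is exactly condition \eqref{e_1}. For sufficiency, the closed walk projects onto the cycle $c_{2g}$ of the base graph, which has distinct nodes. Its lifted nodes are then distinct as well, so the walk is a genuine $2g$-cycle. For necessity, any cycle in the coupled graph projecting to $c_{2g}$ must follow this forced walk and hence must satisfy \eqref{e_1}.

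The one point needing care, and the main obstacle, is termination: the positions visited must stay inside the valid ranges $0\le s\le L-1$ and $0\le t\le L+m-1$. The offsets along the walk are bounded by a constant depending only on $g$ and $m$. So for $L$ large relative to $gm$, I would choose the starting position $s_1$ in the middle of the chain, which keeps every visited position valid. Since "cycle candidate in the protograph" refers to the edge-spreading pattern and not to a particular $L$, this choice is harmless. With that observation the lemma reduces to the telescoping identity, consistent with \cite{r34}.
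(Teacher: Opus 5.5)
Your argument is correct. The paper gives no proof of its own and simply cites \cite{r34}, where the lemma rests on the same bookkeeping you use: track the position offsets $+\mathbf{P}(i_k,j_k)$ and $-\mathbf{P}(i_k,j_{k+1})$ along the traversal, and note that the lifted walk closes exactly when \eqref{e_1} holds. Your termination remark is a genuine refinement rather than a formality, because the equivalence holds only when $L$ is large enough to contain the lifted walk; for example, with $L=1$ a $4$-cycle candidate satisfying \eqref{e_1} but with $\mathbf{P}(i_1,j_1)\neq\mathbf{P}(i_1,j_2)$ does not lift to a cycle.
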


\section{Upper bounds via the Combinatorial Nullstellensatz}

In this section, we establish the main algebraic existence result of this paper. Instead of analyzing the probability that a harmful structure remains active under random edge spreading, we encode the edge-spreading design problem as a polynomial non-vanishing problem over a finite grid. Each edge of the fully connected base graph is associated with a variable, and the activation condition of a cycle-based harmful structure is represented by the vanishing of a polynomial in these variables. Hence, if the product of the corresponding polynomials is nonzero at some point of the grid, then the associated edge-spreading assignment destroys all prescribed harmful structures simultaneously. By applying the Combinatorial Nullstellensatz to this product polynomial, we obtain explicit sufficient memory conditions guaranteeing the existence of such assignments. These conditions provide deterministic upper bounds on the memory required for structure-avoiding SC-LDPC code constructions.

\begin{lemma}[Grid form of the Combinatorial Nullstellensatz {\cite{r35}}]
Let $F$ be a field, and let $S_i\subset F$ be finite nonempty sets.
If $f\in F[x_1,\ldots,x_n]$ is nonzero and satisfies
\[
    \deg_{x_i} f < |S_i|,\qquad i=1,\ldots,n,
\]
then there exists
$
    (s_1,\ldots,s_n)\in S_1\times\cdots\times S_n
$
such that
\[
    f(s_1,\ldots,s_n)\neq 0.
\]
\end{lemma}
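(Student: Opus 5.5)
The plan is to prove the grid form by induction on the number of variables $n$, using only the elementary fact that a nonzero univariate polynomial of degree $d$ over a field has at most $d$ roots.

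\textbf{Base case.} For $n=1$, $f$ is a nonzero polynomial in $x_1$ with $\deg f<|S_1|$. It therefore has fewer than $|S_1|$ roots in $F$, so some $s_1\in S_1$ satisfies $f(s_1)\neq 0$.

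\textbf{Inductive step.} For $n\ge 2$, expand $f$ in powers of $x_n$:
\[
f(x_1,\ldots,x_n)=\sum_{k=0}^{d} g_k(x_1,\ldots,x_{n-1})\,x_n^{k},
\]
where $d=\deg_{x_n} f<|S_n|$ and $g_d\neq 0$. Since $\deg_{x_n} f\ge k$ for every $k\le d$, each monomial of $g_k$ arises from a monomial of $f$ with the same exponents in $x_1,\ldots,x_{n-1}$. Hence $\deg_{x_i} g_d\le \deg_{x_i} f<|S_i|$ for $i\le n-1$. The induction hypothesis applied to $g_d$ gives $(s_1,\ldots,s_{n-1})\in S_1\times\cdots\times S_{n-1}$ with $g_d(s_1,\ldots,s_{n-1})\neq 0$.

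Now set $h(x_n)=f(s_1,\ldots,s_{n-1},x_n)$. Its leading coefficient is $g_d(s_1,\ldots,s_{n-1})\neq 0$, so $h$ is a nonzero univariate polynomial of degree $d<|S_n|$. By the base case, some $s_n\in S_n$ satisfies $h(s_n)\neq 0$, which is exactly the required point.

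The only delicate point is making sure the specialization keeps the polynomial nonzero. Choosing to preserve the leading coefficient in $x_n$, rather than an arbitrary coefficient, handles this cleanly: it keeps both the degree and the nonvanishing of $h$ under control.

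An alternative would be to derive the statement from Alon's coefficient form of the Nullstellensatz, reducing $f$ modulo $\prod_{s\in S_i}(x_i-s)$. However, the direct induction is shorter and self-contained, so that is the route I would take.
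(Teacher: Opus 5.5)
Your proof is correct and is essentially the standard argument. The paper gives no proof of its own and imports the statement from Alon~\cite{r35}, where it appears in contrapositive form (Lemma~2.1 there) and is proved by the same induction on $n$, expanding in powers of $x_n$ and using the univariate root bound; Alon shows that every coefficient polynomial vanishes on the smaller grid, while you argue directly through the leading coefficient $g_d$. One small wording point: the clause ``Since $\deg_{x_n} f\ge k$ for every $k\le d$'' plays no role, because $\deg_{x_i} g_d\le \deg_{x_i} f$ holds simply since $x_1^{\alpha_1}\cdots x_{n-1}^{\alpha_{n-1}}$ is a monomial of $g_d$ exactly when $x_1^{\alpha_1}\cdots x_{n-1}^{\alpha_{n-1}}x_n^{d}$ is a monomial of $f$.
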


\begin{definition}[Cycle hitting set and load]
\label{def:cycle_hitting_set_load}
Let $B$ be a base graph with edge set $E(B)$, and let $\mathcal{H}=\{H_1,\ldots,H_k\}$
be a collection of avoidable harmful structures in $B$. For each
$H_i\in\mathcal{H}$, denote by $N_b^i=\{c_{i,1},c_{i,2},\ldots,c_{i,n_b^i}\}$ the set of fundamental cycles of $H_i$. Define
$
    \mathcal{C}_{\mathcal{H}}
    =
    \bigcup_{i=1}^{k} N_b^i
$
as the set of all fundamental cycles appearing in the structures of
$\mathcal{H}$, where repeated cycles are identified. A subset
$
    \mathcal{T}\subseteq \mathcal{C}_{\mathcal{H}}
$
is called a \textit{cycle hitting set} of $\mathcal{H}$ if
$
    \mathcal{T}\cap N_b^i \neq \emptyset,
    (i=1,2,\ldots,k ).
$
That is, $\mathcal{T}$ contains at least one fundamental cycle from each
harmful structure. For a cycle hitting set $\mathcal{T}$ and an edge $e\in E(B)$, define
the \textit{load} of $e$ induced by $\mathcal{T}$ as
$
    \lambda_{\mathcal{T}}(e)
    =
    \bigl|\{c\in \mathcal{T}: e\in E(c)\}\bigr|.
$
The \textit{maximum load} of $\mathcal{T}$ is defined as
$
    W(\mathcal{T})
    =
    \max_{e\in E(B)} \lambda_{\mathcal{T}}(e).
$
Finally, the \textit{minimum hitting-set load} of $\mathcal{H}$ is defined as
$
    W_{\mathcal{H}}^{\mathrm{hit}}
    =
    \min_{\mathcal{T}} W(\mathcal{T}),
$
where the minimum is taken over all cycle hitting sets
$\mathcal{T}$ of $\mathcal{H}$.
\end{definition}

\begin{theorem}[Cycle-hitting-set/load-bound sufficient condition]
\label{theorem:chs_load_bound}
Consider a SC-LDPC code constructed from a $\gamma \times \kappa$
fully-connected block-code base graph $B$ without parallel edges. Let
$E(B)=\{e_1,\ldots,e_{\gamma\kappa}\}$ be the edge set of $B$, and let $S=\{a_0,a_1,\ldots,a_{m_t}\}$ be the set induced by the coupling pattern
$\bm{a}=(a_0,a_1,\ldots,a_{m_t})$, where
$|S|=m_t+1$ and \(0 = a_0 < a_1 < \cdots < a_{m_t} = m\).

Let $\mathcal{H}=\{H_1,\ldots,H_k\}$
 be a collection of avoidable harmful structures in $B$, and let
$W_{\mathcal{H}}^{\mathrm{hit}}$ be the minimum hitting-set load defined
in Definition~\ref{def:cycle_hitting_set_load}. If
\[
    m \ge m_t \ge W_{\mathcal{H}}^{\mathrm{hit}}
\]
then there exists an edge-spreading assignment under the coupling
pattern $\bm{a}$ that breaks all harmful structures in
$\mathcal{H}$. Here \(m\) denotes the maximum admissible coupling memory.

\end{theorem}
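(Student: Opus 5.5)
We reduce the statement to a single application of the grid form of the Combinatorial Nullstellensatz. Fix a cycle hitting set $\mathcal{T}$ attaining $W(\mathcal{T})=W_{\mathcal{H}}^{\mathrm{hit}}$. It suffices to find an assignment $\mathbf{P}$ with values in $S$ under which every cycle $c\in\mathcal{T}$ is inactive. Each harmful structure $H_i$ contains some fundamental cycle $c\in\mathcal{T}\cap N_b^i$, and a structure cannot survive edge spreading if one of its constituent cycles is not a cycle in the protograph. Hence breaking every cycle of $\mathcal{T}$ breaks every $H_i$.

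\emph{Setting up the polynomial.} Work over $F=\mathbb{Q}$ (or $\mathbb{R}$). Attach a variable $x_e$ to each edge $e=(i,j)\in E(B)$, to be instantiated as $\mathbf{P}(i,j)$. For a cycle candidate $c=(j_1,i_1,\ldots,j_g,i_g)\in\mathcal{T}$, define the linear form
\[
L_c(\mathbf{x})=\sum_{k=1}^{g} x_{(i_k,j_k)}-\sum_{k=1}^{g} x_{(i_k,j_{k+1})}.
\]
By Lemma~\ref{lemma_1}, $c$ survives in the protograph iff $L_c(\mathbf{P})=0$. We then set
\[
f(\mathbf{x})=\prod_{c\in\mathcal{T}} L_c(\mathbf{x}).
\]
A point $\mathbf{s}\in S^{E(B)}$ with $f(\mathbf{s})\neq 0$ is exactly an assignment with entries in $S$ (so all entries lie in $\{0,\ldots,m\}$) that breaks all of $\mathcal{T}$.

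\emph{Checking the hypotheses.} There are two things to verify.

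First, $f\not\equiv 0$, i.e.\ each $L_c$ is a nonzero polynomial. Since $B$ has no parallel edges and a cycle visits $2g$ distinct edges, the variables in the positive sum are disjoint from those in the negative sum, and no variable repeats. So every coefficient of $L_c$ is $\pm1$, and $L_c\neq0$. Because $F[\mathbf{x}]$ is an integral domain, the product $f$ is nonzero.

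Second, the degree condition. Each $L_c$ has $\deg_{x_e}L_c\le 1$, with equality iff $e\in E(c)$. Hence
\[
\deg_{x_e} f=\lambda_{\mathcal{T}}(e)\le W(\mathcal{T})=W_{\mathcal{H}}^{\mathrm{hit}}\le m_t<m_t+1=|S|.
\]
The Combinatorial Nullstellensatz then gives $\mathbf{s}\in S^{\gamma\kappa}$ with $f(\mathbf{s})\ne0$, and setting $\mathbf{P}(i,j)=s_{(i,j)}$ finishes the argument. The bound $m\ge m_t$ only guarantees $S\subseteq\{0,\ldots,m\}$, so the assignment is admissible for memory $m$.

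\emph{Expected obstacle.} The only delicate point is the reduction from structures to cycles. We must justify that a harmful structure is "broken" as soon as one of its fundamental cycles is inactive, i.e.\ that the paper's notion of an avoidable structure is that it survives only if all of its fundamental cycles satisfy~\eqref{e_1}. I would state this explicitly as the working definition. A secondary point is confirming that the degree of a product is additive in each variable separately, which holds over a domain because leading coefficients in $x_e$ multiply to a nonzero element.
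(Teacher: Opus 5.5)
Your proposal is correct and follows essentially the same route as the paper. Both arguments take a hitting set attaining $W_{\mathcal{H}}^{\mathrm{hit}}$ and form the product of the cycle linear forms. They use the absence of parallel edges to make each factor nonzero, bound $\deg_{x_e}$ by the load $\lambda_{\mathcal{T}}(e)\le W_{\mathcal{H}}^{\mathrm{hit}}\le m_t<|S|$, and then apply the grid Combinatorial Nullstellensatz.

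The structure-to-cycle reduction you flag is also assumed implicitly in the paper, where it is built into the notion of a hitting set breaking each $H_i$. Your explicit choice of $F=\mathbb{Q}$ is, if anything, slightly more careful than the paper's ``field containing $S$''.
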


\begin{IEEEproof}
Assign to each edge $e\in E(B)$ a variable $x_e$, where
$x_e\in S$ indicates the component matrix to which $e$ is
assigned after edge spreading.

For each fundamental cycle $c\in \mathcal{C}_{\mathcal{H}}$, let
$l_c(\mathbf{x})$ denote the algebraic cycle polynomial corresponding
to (\ref{e_1}). The cycle $c$ remains active if and only if
\[
    l_c(\mathbf{x})=0.
\]
Since the base graph has no parallel edges, each fundamental cycle
contains distinct edge variables. Hence $l_c(\mathbf{x})$ is a nonzero
linear polynomial over a field containing $S$.

By Definition~\ref{def:cycle_hitting_set_load}, choose a cycle hitting
set $\mathcal{T}^{\star}$ such that
\[
    W(\mathcal{T}^{\star})
    =
    W_{\mathcal{H}}^{\mathrm{hit}}.
\]
Define
\[
    L_{\mathcal{T}^{\star}}(\mathbf{x})
    =
    \prod_{c\in \mathcal{T}^{\star}} l_c(\mathbf{x}).
\]
Since every $l_c(\mathbf{x})$ is nonzero and the underlying coefficient
ring is an integral domain,
\[
    L_{\mathcal{T}^{\star}}(\mathbf{x})\not\equiv 0.
\]

If
\[
    L_{\mathcal{T}^{\star}}(\mathbf{x})\neq 0,
\]
then
\[
    l_c(\mathbf{x})\neq 0,
    \qquad \forall c\in\mathcal{T}^{\star}.
\]
Thus every cycle selected in $\mathcal{T}^{\star}$ is made inactive.
Since $\mathcal{T}^{\star}$ is a cycle hitting set, every harmful
structure $H_i\in\mathcal{H}$ contains at least one selected fundamental
cycle. Therefore, every harmful structure is broken.

It remains to show that such an assignment exists. For any edge
$e\in E(B)$, the variable $x_e$ can appear in
$L_{\mathcal{T}^{\star}}(\mathbf{x})$ only through the factors
$l_c(\mathbf{x})$ with $e\in E(c)$. Since each $l_c(\mathbf{x})$ is
linear in every edge variable, we have
\[
    \deg_{x_e} L_{\mathcal{T}^{\star}}(\mathbf{x})
    \le
    \lambda_{\mathcal{T}^{\star}}(e)
    \le
    W(\mathcal{T}^{\star})
    =
    W_{\mathcal{H}}^{\mathrm{hit}} .
\]
If
\[
    |S|=m_t+1 > W_{\mathcal{H}}^{\mathrm{hit}},
\]
then
\[
    \deg_{x_e} L_{\mathcal{T}^{\star}}(\mathbf{x})
    <
    |S|,
    \qquad \forall e\in E(B).
\]
By the grid form of the Combinatorial Nullstellensatz, the nonzero
polynomial $L_{\mathcal{T}^{\star}}(\mathbf{x})$ cannot vanish on the
entire Cartesian product $S^{|E(B)|}$. Hence there exists
\[
    \mathbf{x}^{\star}\in S^{|E(B)|}
\]
such that
\[
    L_{\mathcal{T}^{\star}}(\mathbf{x}^{\star})\neq 0.
\]
This assignment makes every cycle in $\mathcal{T}^{\star}$ inactive and
therefore breaks every harmful structure in $\mathcal{H}$.

Finally, since $W_{\mathcal{H}}^{\mathrm{hit}}$ is an integer, the
condition
\[
    m_t+1 > W_{\mathcal{H}}^{\mathrm{hit}}
\]
is equivalent to
\[
    m_t \ge W_{\mathcal{H}}^{\mathrm{hit}}.
\]
Together with $m\ge m_t$, this gives the sufficient condition
\[
    m \ge m_t \ge W_{\mathcal{H}}^{\mathrm{hit}}.
\]
\end{IEEEproof}

\begin{corollary}
\label{cor:all_fundamental_cycles}
Consider a SC-LDPC code constructed from a $\gamma\times\kappa$
fully-connected block-code base graph without parallel edges under the
coupling pattern $\bm{a}=(a_0,a_1,\ldots,a_{m_t})$. Let
$\mathcal{H}=\{H_1,\ldots,H_k\}$ be a collection of avoidable harmful
structures, and let $N_b^i$ denote the set of fundamental cycles of
$H_i$. Define
\[
    \mathcal{T}_{\mathcal{H}}
    =
    \bigcup_{i=1}^{k} N_b^i .
\]
For each edge $e$, let
\[
    \lambda_{\mathcal{H}}(e)
    =
    |\{c\in \mathcal{T}_{\mathcal{H}}: e\in E(c)\}|,
    \qquad
    W_{\mathcal{H}}
    =
    \max_{e}\lambda_{\mathcal{H}}(e).
\]
Then all harmful structures in $\mathcal{H}$ can be broken if
\[
    m \ge m_t \ge W_{\mathcal{H}} .
\]
\end{corollary}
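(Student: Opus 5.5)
The plan is to deduce the corollary directly from Theorem~\ref{theorem:chs_load_bound}. We show that the full union $\mathcal{T}_{\mathcal{H}}$ is itself an admissible cycle hitting set. Its maximum load is exactly $W_{\mathcal{H}}$, so the minimum hitting-set load satisfies $W_{\mathcal{H}}^{\mathrm{hit}}\le W_{\mathcal{H}}$.

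First, observe that $\mathcal{T}_{\mathcal{H}}$ coincides with the set $\mathcal{C}_{\mathcal{H}}$ of Definition~\ref{def:cycle_hitting_set_load}, with repeated cycles identified in the same way. Hence $\mathcal{T}_{\mathcal{H}}\subseteq\mathcal{C}_{\mathcal{H}}$ trivially. We also have $\mathcal{T}_{\mathcal{H}}\cap N_b^i=N_b^i\neq\emptyset$ for every $i$. This uses the fact that each avoidable harmful structure $H_i$ has at least one fundamental cycle, which is implicit in the notion of a cycle-based harmful structure. So $\mathcal{T}_{\mathcal{H}}$ is a cycle hitting set.

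Next, the load of an edge $e$ induced by $\mathcal{T}_{\mathcal{H}}$ is by definition $\lambda_{\mathcal{T}_{\mathcal{H}}}(e)=\lambda_{\mathcal{H}}(e)$. It follows that $W(\mathcal{T}_{\mathcal{H}})=W_{\mathcal{H}}$. Since $W_{\mathcal{H}}^{\mathrm{hit}}$ is a minimum over all cycle hitting sets, we obtain $W_{\mathcal{H}}^{\mathrm{hit}}\le W_{\mathcal{H}}$.

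Finally, the hypothesis $m\ge m_t\ge W_{\mathcal{H}}$ gives $m\ge m_t\ge W_{\mathcal{H}}^{\mathrm{hit}}$. Theorem~\ref{theorem:chs_load_bound} then yields an edge-spreading assignment under $\bm{a}$ that breaks every $H_i$.

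Alternatively, one could rerun the Nullstellensatz argument with the product $\prod_{c\in\mathcal{T}_{\mathcal{H}}} l_c(\mathbf{x})$, whose degree in each $x_e$ is at most $\lambda_{\mathcal{H}}(e)$. That route gives the stronger conclusion that all fundamental cycles are destroyed simultaneously. There is no real obstacle. The only point needing care is the nonemptiness of each $N_b^i$ and the identification of repeated cycles, so that $\lambda_{\mathcal{H}}$ counts each distinct cycle once, consistent with the load in Definition~\ref{def:cycle_hitting_set_load}.
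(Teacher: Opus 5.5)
Your proposal is correct and matches the paper's proof: both take the full union $\mathcal{T}_{\mathcal{H}}$ as the cycle hitting set, observe that its maximum load is exactly $W_{\mathcal{H}}$, and invoke Theorem~\ref{theorem:chs_load_bound}. You also spell out the inequality $W_{\mathcal{H}}^{\mathrm{hit}}\le W_{\mathcal{H}}$ and the nonemptiness of each $N_b^i$, both of which the paper's ``follows directly'' and ``clearly'' leave implicit.
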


\begin{IEEEproof}
Choose the cycle hitting set as all fundamental cycles, i.e.,
$\mathcal{T}=\mathcal{T}_{\mathcal{H}}=\bigcup_{i=1}^{k}N_b^i$.
Clearly, $\mathcal{T}\cap N_b^i\neq\emptyset$ for every
$H_i\in\mathcal{H}$, so $\mathcal{T}$ is a valid cycle hitting set.
Moreover, its maximum edge load is exactly $W_{\mathcal{H}}$. Therefore,
the result follows directly from Theorem~\ref{theorem:chs_load_bound}.
\end{IEEEproof}

\begin{corollary}
    For any SC-LDPC code constructed with respect to a $\gamma \times \kappa$ fully-connected block code base graph, with no parallel edges, a sufficient condition for having a girth of at least 6 is  $m \ge m_t \ge (\gamma-1)(\kappa-1)$.
\end{corollary}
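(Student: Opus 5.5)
The plan is to apply Corollary~\ref{cor:all_fundamental_cycles} with $\mathcal{H}$ taken to be the collection of all $4$-cycles of the fully connected base graph $B$. Each $4$-cycle is its own single fundamental cycle, so $\mathcal{T}_{\mathcal{H}}$ is exactly the set of all $4$-cycle candidates in $B$. If every $4$-cycle candidate is made inactive, then by Lemma~\ref{lemma_1} no $4$-cycle survives in the protograph. Since the coupled Tanner graph is bipartite, it then has girth at least $6$. (The same reasoning carries over to the terminated chain $\mathbf{H}_{\mathrm{SC}}$, because a $4$-cycle there projects onto a $4$-cycle candidate of $B$ satisfying (\ref{e_1}).)

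The main step is to compute the load $\lambda_{\mathcal{H}}(e)$ for each edge $e=(i,j)$. A $4$-cycle in the complete bipartite graph $K_{\gamma,\kappa}$ is determined by an unordered pair of check types $\{i,i'\}$ and an unordered pair of variable types $\{j,j'\}$. It contains the edge $(i,j)$ exactly when $i$ is one of its check types and $j$ is one of its variable types. So the cycles through $e$ correspond to a choice of $i'\neq i$ ($\gamma-1$ options) and $j'\neq j$ ($\kappa-1$ options), which gives $\lambda_{\mathcal{H}}(e)=(\gamma-1)(\kappa-1)$ for every edge. Hence $W_{\mathcal{H}}=(\gamma-1)(\kappa-1)$.

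Substituting this value into Corollary~\ref{cor:all_fundamental_cycles} gives the stated condition $m\ge m_t\ge(\gamma-1)(\kappa-1)$. The Combinatorial Nullstellensatz then guarantees an assignment in $S^{\gamma\kappa}$ that breaks every $4$-cycle. There is one subtlety in the counting: each cycle must be identified as an unordered object, so that it contributes a single linear factor $l_c$. Counting the orientations or starting points of a cycle separately would inflate the degree. Finally, each $l_c$ is a nonzero linear form: its four edge variables are distinct, with coefficients $\pm1$. This nonvanishing is already covered by the proof of Theorem~\ref{theorem:chs_load_bound}.
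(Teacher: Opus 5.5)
Your proposal is correct and follows the paper's own argument. The paper applies the all-fundamental-cycles corollary with $\mathcal{H}=\mathcal{C}_4$ and asserts $W_{\mathcal{C}_4}=(\gamma-1)(\kappa-1)$; your count of the $4$-cycles through an edge $(i,j)$, via a choice of $i'\neq i$ and $j'\neq j$, supplies the justification the paper leaves implicit, and your remark that a $4$-cycle of $\mathbf{H}_{\mathrm{SC}}$ projects to a genuine base-graph $4$-cycle candidate satisfying the activation condition is also correct.
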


\begin{IEEEproof}
       When $\mathcal{H}=\mathcal{C}_4$, $W_{\mathcal{C}_4}=(\gamma-1)(\kappa-1)$.
\end{IEEEproof}

\begin{remark}[Explicit \(4\)-cycle elimination]
For the fully connected \(\gamma\times\kappa\) base matrix, the memory bound
\(m=(\gamma-1)(\kappa-1)\) can be achieved by a simple explicit assignment.
Index the rows and columns by \(i\in\{0,\ldots,\gamma-1\}\) and
\(j\in\{0,\ldots,\kappa-1\}\), and set
\begin{equation}
    P(i,j)=ij .
\end{equation}
Then \(\max_{i,j}P(i,j)=(\gamma-1)(\kappa-1)\). For any \(4\)-cycle determined
by \(i_1\neq i_2\) and \(j_1\neq j_2\), the difference in the two sides of the
cycle activation condition is
\begin{align}
& P(i_1,j_1)+P(i_2,j_2)-P(i_1,j_2)-P(i_2,j_1) \nonumber\\
&\quad =
(i_1-i_2)(j_1-j_2)\neq 0 .
\end{align}
Therefore no \(4\)-cycle remains active after edge spreading, and the resulting
coupled protograph has girth at least six.
\end{remark}

\begin{corollary}
    For any SC-LDPC code constructed with respect to a $\gamma \times \kappa$ fully-connected block code base graph, with no parallel edges, a sufficient condition for having a girth of at least 8 is  $m \ge m_t \ge (\gamma-1)(\gamma-2)(\kappa-1)(\kappa-2)+(\gamma-1)(\kappa-1)$.
\end{corollary}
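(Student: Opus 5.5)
Girth at least $8$ means no active $4$-cycles and no active $6$-cycles, so I will apply Corollary~\ref{cor:all_fundamental_cycles} with $\mathcal{H}=\mathcal{C}_4\cup\mathcal{C}_6$. Each harmful structure is a single cycle and is its own fundamental cycle, so $\mathcal{T}_{\mathcal{H}}=\mathcal{C}_4\cup\mathcal{C}_6$. It then suffices to compute the maximum edge load
\[
W_{\mathcal{H}}=\max_{e}\bigl(\lambda_{\mathcal{C}_4}(e)+\lambda_{\mathcal{C}_6}(e)\bigr)
\]
and check that it equals $(\gamma-1)(\gamma-2)(\kappa-1)(\kappa-2)+(\gamma-1)(\kappa-1)$. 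The corollary then gives the claimed sufficient condition $m\ge m_t\ge W_{\mathcal{H}}$.

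The base graph is complete bipartite and hence edge-transitive, so every edge has the same load and the maximum is just the common value. Fix an edge $e=(i_1,j_1)$. The $4$-cycle term was already used in the girth-$6$ corollary. A $4$-cycle through $e$ is fixed by choosing another row $i_2\neq i_1$ and another column $j_2\neq j_1$, so $\lambda_{\mathcal{C}_4}(e)=(\gamma-1)(\kappa-1)$.

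For $6$-cycles, a $6$-cycle in $K_{\gamma,\kappa}$ uses three distinct rows and three distinct columns, and it is a Hamiltonian cycle of the induced $K_{3,3}$. To count those through $e=(i_1,j_1)$, I will traverse the cycle as $j_1\to i_1\to j_2\to i_2\to j_3\to i_3\to j_1$, starting along $e$ from $i_1$. This fixes an orientation, so each cycle is counted exactly once.
\begin{itemize}
\item $j_2$ has $\kappa-1$ choices.
\item $i_2$ has $\gamma-1$ choices.
\item $j_3$ has $\kappa-2$ choices.
\item $i_3$ has $\gamma-2$ choices.
\end{itemize}
This gives $\lambda_{\mathcal{C}_6}(e)=(\gamma-1)(\gamma-2)(\kappa-1)(\kappa-2)$. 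Adding the two loads yields the bound. If $\gamma<3$ or $\kappa<3$, there are no $6$-cycles, and the formula correctly gives zero for that term.

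The main obstacle is avoiding over- or under-counting the $6$-cycles through $e$. Each $6$-cycle is a cyclic sequence, so I must confirm that fixing both $e$ and the direction of traversal gives a bijection between cycles and ordered tuples $(j_2,i_2,j_3,i_3)$. As a sanity check, in $K_{3,3}$ the formula gives $2\cdot1\cdot2\cdot1=4$ cycles through $e$. This agrees with the $6$ Hamiltonian cycles of $K_{3,3}$, each containing $6$ of its $9$ edges: $6\cdot 6/9=4$. A second point to verify is that each cycle polynomial $l_c$ from~(\ref{e_1}) is a nonzero linear form, so Theorem~\ref{theorem:chs_load_bound} applies. For a $6$-cycle the six edges are distinct variables, so $l_c$ is nonzero with coefficients $\pm1$.
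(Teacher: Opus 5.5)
Your proposal is correct and follows the same route as the paper. Both apply Corollary~\ref{cor:all_fundamental_cycles} with $\mathcal{H}=\mathcal{C}_4\cup\mathcal{C}_6$ and show that every edge has load $(\gamma-1)(\kappa-1)+(\gamma-1)(\gamma-2)(\kappa-1)(\kappa-2)$; the paper only asserts this value, whereas you justify it with the oriented-traversal count and the $K_{3,3}$ check.
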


\begin{IEEEproof}
       When $\mathcal{H}=\mathcal{C}_4 \cup \mathcal{C}_6$, $W_{\mathcal{H}}=(\gamma-1)(\gamma-2)(\kappa-1)(\kappa-2)+(\gamma-1)(\kappa-1)$.
\end{IEEEproof}

\begin{remark}
In the construction of SC-LDPC codes, previous works have established lower bounds for the destruction of cycles of length 4 and 6. Specifically, for $\gamma=3$, the lower bound for destroying all 4-cycles is given by $\lceil{\frac{\kappa-1}{2}}\rceil$, and the lower bound for destroying both 4-cycles and 6-cycles is given by $\lceil{\frac{\kappa(\kappa-1)}{8}}\rceil$ \cite{r34}. In contrast, this work presents new upper bounds that are asymptotically tight up to a constant. Specifically, the upper bound for destroying all 4-cycles is $2(\kappa-1)$, and for destroying both 4-cycles and 6-cycles, the upper bound is $2(\kappa-1)(\kappa-2) + 2(\kappa-1)$. These results not only improve upon previous upper bounds but also demonstrate that the upper and lower bounds are asymptotically tight, providing a refined and more accurate characterization of the sufficient conditions for the destruction of harmful cycles in the construction of SC-LDPC codes.

\end{remark}

\section{Counting bounds from the existence theorem}
\label{sec:counting_bounds}
This section turns the preceding existence theorem into quantitative
counting bounds: by applying the Alon--Füredi theorem to the elimination
polynomial on the finite edge-spreading grid, we lower-bound the number
of assignments that avoid the prescribed harmful structures.

Let
\[
    A=\prod_{i=1}^n A_i
\]
be a finite grid over a field $F$. For $f\in F[t_1,\ldots,t_n]$, define
\[
    \mathcal U_A(f)=\{x\in A:f(x)\neq 0\}.
\]
We also write

\[
\mathfrak m(a_1,\ldots,a_n;N)
=
\min_{\substack{
\mathbf y\in\mathbb Z^n\\
1\le y_i\le a_i,\ \sum_i y_i=N
}}
\prod_{i=1}^n y_i .
\]

\begin{lemma}[Alon--Füredi theorem {\cite{r36}}]
Let $A=\prod_{i=1}^n A_i\subset F^n$ be a finite grid. If
$f\in F[t_1,\ldots,t_n]$ does not vanish on all points of $A$, then
\[
    \#\mathcal U_A(f)
    \ge
    \mathfrak m
    \left(
    \#A_1,\ldots,\#A_n;
    \sum_{i=1}^n \#A_i-\deg f
    \right).
\]
\end{lemma}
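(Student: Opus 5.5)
The plan is the classical Alon--Füredi argument: induct on the number of variables $n$, and inside each step induct on $\deg f$. Throughout, set $N=\sum_i \#A_i-\deg f$. If $f$ does not vanish on all of $A$, we may first replace $f$ by its reduction modulo the ideal generated by $g_i(t_i)=\prod_{a\in A_i}(t_i-a)$. This reduction does not change the values of $f$ on $A$ and does not increase the total degree. After it, $\deg_{t_i} f<\#A_i$ for every $i$, so it is enough to prove the bound for such reduced $f$. We also note the trivial cases: if $N\le n$ then $\mathfrak m=1$ (or the constraint set needs the convention $y_i\ge1$), and the bound reduces to $\mathcal U_A(f)\neq\emptyset$, which is the hypothesis. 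This matches the grid form of the Combinatorial Nullstellensatz used earlier.

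For the inductive step we single out the last variable and write
\[
f=\sum_{k=0}^{d_n} f_k(t_1,\ldots,t_{n-1})\,t_n^{k}, \qquad d_n<\#A_n .
\]
For each fixed $a\in A_n$, let $f^{(a)}=f(t_1,\ldots,t_{n-1},a)$, a polynomial on the grid $A'=\prod_{i<n}A_i$. Let $y_n$ be the number of $a\in A_n$ for which $f^{(a)}$ does not vanish identically on $A'$. For each such $a$, the induction hypothesis applied to $f^{(a)}$, whose degree is at most $\deg f$ minus something we control, gives at least $\mathfrak m(\#A_1,\ldots,\#A_{n-1};\sum_{i<n}\#A_i-\deg f^{(a)})$ nonzeros.

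The key estimate is the following. Let $f_{d}$ be the top coefficient in $t_n$ (with $d=d_n$) among those not vanishing on $A'$. For every $a$, the polynomial $f^{(a)}$ has degree at most $\deg f-d$ on the relevant part. More carefully, we apply induction directly to $f_d$, which has degree at most $\deg f-d$ and is nonzero on $A'$. At each point $x'\in\mathcal U_{A'}(f_d)$, the univariate polynomial $f(x',t_n)$ has degree exactly $d$, so it is nonzero on at least $\#A_n-d$ points of $A_n$. This gives
\[
\#\mathcal U_A(f)\ \ge\ (\#A_n-d)\cdot \mathfrak m\Bigl(\#A_1,\ldots,\#A_{n-1};\ \sum_{i<n}\#A_i-(\deg f-d)\Bigr).
\]
Setting $y_n=\#A_n-d\in[1,\#A_n]$, the new budget is $\sum_{i<n}\#A_i-\deg f+d=N-y_n$. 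The right-hand side is therefore $y_n\cdot\min\prod_{i<n}y_i$ over $\sum_{i<n}y_i=N-y_n$, which is at least $\mathfrak m(\#A_1,\ldots,\#A_n;N)$.

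The main obstacle is this last combination step. We must make sure that the choice of $d$ really yields a nonzero coefficient $f_d$ on $A'$; taking the largest such $d$ and discarding the coefficients that vanish on $A'$ handles this without changing values on $A$. We must also check that the ranges $1\le y_i\le\#A_i$ are respected, for instance that $N-y_n\ge n-1$, so that the minimization is over a nonempty set. Once that bookkeeping is in place, the inequality $y_n\,\mathfrak m(\ldots;N-y_n)\ge\mathfrak m(\ldots;N)$ follows immediately from the definition of $\mathfrak m$ as a minimum.
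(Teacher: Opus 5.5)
The paper gives no proof of this lemma; it is quoted from Alon--F\"uredi. Your argument is the standard proof of that theorem, and it is correct: reduce modulo the $g_i$, induct on $n$ through the leading $t_n$-coefficient $f_d$ with $\deg f_d\le\deg f-d$, count at least $\#A_n-d$ nonroots on each fibre over $\mathcal U_{A'}(f_d)$, and absorb $y_n=\#A_n-d$ into the minimum defining $\mathfrak m$.

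A few things should be tightened.

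\emph{Monotonicity of $\mathfrak m$.} You use twice, without saying so, that $\mathfrak m(a_1,\ldots,a_n;N)$ is nondecreasing in $N$. The first use is when you pass from the reduced polynomial, whose degree may have dropped, back to $f$. The second is when you replace $\deg f_d$ by the bound $\deg f-d$ in the induction hypothesis. The fact is true: from an optimal $\mathbf y$ for $N+1$, lower some $y_i\ge 2$ by one. It should be stated.

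\emph{The deferred bookkeeping.} The check you postpone is automatic. After reduction every $f_k$ is reduced, so by the grid Nullstellensatz a coefficient vanishing on $A'$ is the zero polynomial. Hence $d=\deg_{t_n}f$. A top-degree monomial of $f$ then has $t_n$-exponent at most $d$ and remaining exponents summing to at most $\sum_{i<n}(\#A_i-1)$. So $\deg f-d\le\sum_{i<n}(\#A_i-1)$, i.e.\ $N-y_n\ge n-1$, and the minimization is over a nonempty set.

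\emph{A false sentence and leftover scaffolding.} Delete the sentence asserting that $f^{(a)}$ has degree at most $\deg f-d$; specializing $t_n=a$ need not lower the degree. Delete also the abandoned definition of $y_n$ as the number of good $a\in A_n$, which clashes with your later $y_n=\#A_n-d$. The announced inner induction on $\deg f$ is never used.

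\emph{Base case and convention.} State the base case $n=1$: a nonzero reduced univariate polynomial has at least $\#A_1-\deg f$ nonroots in $A_1$. Keep your convention $\mathfrak m=1$ for $N<n$. It is genuinely needed, since the paper's definition leaves $\mathfrak m$ undefined there, while an unreduced $f$ may have $\deg f>\sum_i(\#A_i-1)$.
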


For a uniform grid, this lower bound has the following explicit form.

\begin{lemma}
\label{lem:uniform_AF_bound}
Let \(n\ge 1\), \(s\ge 2\), and \(D\in\mathbb Z\) satisfy
\[
    0\le D\le n(s-1).
\]
Write
\[
    D=q(s-1)+r,
    \qquad
    q\in\mathbb Z_{\ge0},
    \qquad
    0\le r<s-1.
\]
Then
\[
\mathfrak m
\left(
\underbrace{s,\ldots,s}_{n};
ns-D
\right)
=
\begin{cases}
s^{n-q}, & r=0,\\[3pt]
(s-r)s^{n-q-1}, & r>0.
\end{cases}
\]
\end{lemma}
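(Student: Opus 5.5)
We compute the minimum of $\prod_{i=1}^n y_i$ over integer vectors with $1\le y_i\le s$ and $\sum_i y_i = ns-D$. Substitute $z_i=s-y_i$, so that $0\le z_i\le s-1$ and $\sum_i z_i=D$. The task becomes minimizing $\prod_i (s-z_i)$ over this box-constrained set. The feasible set is nonempty because $0\le D\le n(s-1)$. Our claim is that the minimum is attained by the \emph{greedy} (most unbalanced) vector: $q$ coordinates equal to $s-1$ (so $y_i=1$), one coordinate equal to $r$, and the rest equal to $0$. This vector is feasible: if $r>0$ then $D<n(s-1)$ forces $q\le n-1$, so there is room for the $r$-coordinate; if $r=0$ then $q\le n$. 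Its value is $1^q\cdot(s-r)\cdot s^{n-q-1}$ when $r>0$, and $s^{n-q}$ when $r=0$, which matches the claimed formula.

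To prove optimality we use an exchange (smoothing) argument. Take any feasible $\mathbf y$, and suppose two coordinates satisfy $1<y_i\le y_j<s$. Replace $(y_i,y_j)$ by $(y_i-1,y_j+1)$. The sum and the box constraints are preserved. The product of the pair changes from $y_iy_j$ to $(y_i-1)(y_j+1)=y_iy_j+y_i-y_j-1<y_iy_j$, and the other factors are positive. So the full product strictly decreases. Since a minimizer exists (the set is finite), a minimizer admits no such pair. Hence at most one coordinate lies strictly between $1$ and $s$, and every other coordinate is $1$ or $s$.

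It remains to show that any such vector is the greedy one up to permutation. In $z$-coordinates, every entry is $0$ or $s-1$ except at most one value $t\in(0,s-1)$. Write $D=a(s-1)+t$, where $a$ is the number of entries equal to $s-1$. When there is an intermediate entry, $0<t<s-1$, and uniqueness of division with remainder gives $a=q$ and $t=r$. When there is none, $t=0$, which forces $r=0$ and $a=q$. This is the one step that needs care: the intermediate coordinate with $y=s-r$ must coincide with the remainder, and the case $r=0$ must not create a spurious coordinate with $y=1$, which is harmless because $1^1=1$. Either way the minimizer's product equals the stated expression, and since it is a lower bound that is attained, the lemma follows.
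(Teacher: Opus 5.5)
Your proof is correct and follows the same route as the paper: the substitution $z_i=s-y_i$, then identifying the minimizer as the vector that puts the deficit $D$ into $q$ coordinates equal to $s-1$ and one coordinate equal to $r$. The paper only asserts that concentrating the deficit minimizes the product; your exchange step $(y_i,y_j)\mapsto(y_i-1,y_j+1)$ and the division-with-remainder identification supply the justification it leaves out.
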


\begin{IEEEproof}
Set $z_i=s-y_i$. Then
\[
0\le z_i\le s-1,
\qquad
\sum_{i=1}^n z_i=D,
\qquad
\prod_{i=1}^n y_i=\prod_{i=1}^n(s-z_i).
\]
For fixed $\sum_i z_i=D$, the product is minimized by concentrating
the deficit $D$ in as few coordinates as possible. Hence, up to
permutation,
\[
z_1=\cdots=z_q=s-1,\qquad z_{q+1}=r,
\]
when $r>0$, and all remaining $z_i$ are zero. This gives the stated
formula.
\end{IEEEproof}

\begin{theorem}[Counting consequence of the existence theorem]
\label{thm:counting_from_existence}
Consider a SC-LDPC code constructed from a base graph $B$ with no
parallel edges. Let $E(B)$ be its edge set and set
$
    n=|E(B)|.
$
Let
$
    S=\{a_0,a_1,\ldots,a_{m_t}\}\subset F, 
    |S|=m_t+1,
$
and let
$
    A=S^{E(B)}
$
be the set of all edge-spreading assignments using the coupling pattern
$\mathbf a=(a_0,a_1,\ldots,a_{m_t})$.

Let
\[
    \mathcal H=\{H_1,\ldots,H_k\}
\]
be a collection of avoidable harmful structures, and let $W_{\mathcal H}$ be the corresponding weight parameter from
the existence theorem. Let $L_{\mathcal H}$ be the elimination
polynomial for $\mathcal H$, so that
\[
    L_{\mathcal H}(x)\neq 0
\]
implies that the assignment $x$ breaks every harmful structure in
$\mathcal H$.

If
\[
    m\ge m_t\ge W_{\mathcal H},
\]
then the existence theorem implies that $L_{\mathcal H}$ does not vanish
identically on $A$. Here \(m\) denotes the maximum admissible coupling memory. Consequently, the number of edge-spreading
assignments that break all structures in $\mathcal H$ is at least
\[
\resizebox{0.98\columnwidth}{!}{$
\#\mathcal U_A(L_{\mathcal H})
\ge
\mathfrak m
\left(
\underbrace{m_t+1,\ldots,m_t+1}_{n};
n(m_t+1)-\deg(L_{\mathcal H})
\right).
$}
\]
\end{theorem}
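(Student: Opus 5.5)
The argument combines the existence result with the Alon--Füredi lemma on the uniform grid $A=S^{E(B)}$. The elimination polynomial is taken to be $L_{\mathcal H}(\mathbf x)=\prod_{c\in\mathcal T_{\mathcal H}} l_c(\mathbf x)$, with $\mathcal T_{\mathcal H}=\bigcup_i N_b^i$ as in Corollary~\ref{cor:all_fundamental_cycles}, so that $W_{\mathcal H}$ is its maximum edge load. By the argument of Theorem~\ref{theorem:chs_load_bound}, $L_{\mathcal H}(\mathbf x)\neq0$ at $\mathbf x\in A$ means every cycle in $\mathcal T_{\mathcal H}$ is inactive. Hence every harmful structure in $\mathcal H$ is broken, and $\mathcal U_A(L_{\mathcal H})$ is contained in the set of structure-breaking assignments. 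It therefore suffices to lower-bound $\#\mathcal U_A(L_{\mathcal H})$.

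\textbf{Step 1: $L_{\mathcal H}$ does not vanish on all of $A$.} Each $l_c$ is a nonzero linear form because the base graph has no parallel edges. Since $F$ is an integral domain, the product $L_{\mathcal H}$ is nonzero. Each variable $x_e$ has degree at most $\lambda_{\mathcal H}(e)\le W_{\mathcal H}\le m_t<|S|$. The grid form of the Combinatorial Nullstellensatz then gives a point of $A=S\times\cdots\times S$ where $L_{\mathcal H}\neq 0$. This is exactly the hypothesis of the Alon--Füredi theorem.

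\textbf{Step 2: apply Alon--Füredi.} Take $A_i=S$ for all $n=|E(B)|$ coordinates, so $\#A_i=m_t+1$. This gives
\[
\#\mathcal U_A(L_{\mathcal H})\ge \mathfrak m\bigl(m_t+1,\ldots,m_t+1;\ n(m_t+1)-\deg L_{\mathcal H}\bigr),
\]
which is the claimed inequality.

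\textbf{Step 3: the main obstacle is well-definedness of $\mathfrak m$.} The quantity $\mathfrak m$ minimizes over integer vectors with $1\le y_i\le m_t+1$ and $\sum_i y_i = n(m_t+1)-\deg L_{\mathcal H}$, so this set must be nonempty. That requires $n\le n(m_t+1)-\deg L_{\mathcal H}\le n(m_t+1)$, i.e.\ $0\le \deg L_{\mathcal H}\le n m_t$.
\begin{itemize}
\item The lower bound $\deg L_{\mathcal H}\ge 0$ is clear.
\item For the upper bound, the naive estimate $\deg L_{\mathcal H}\le n W_{\mathcal H}$ is not enough. Instead I would rely on the fact that Alon--Füredi already ensures the bound is meaningful whenever $f$ does not vanish on $A$. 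Concretely, reduce $L_{\mathcal H}$ modulo $\prod_{s\in S}(x_e-s)$ for each $e$. Since every individual degree is already below $|S|$, the polynomial is unchanged, so $\deg L_{\mathcal H}\le \sum_e \deg_{x_e} L_{\mathcal H}\le n m_t$.
\end{itemize}
Hence the feasible set of $\mathfrak m$ is nonempty and the bound is well defined. Lemma~\ref{lem:uniform_AF_bound}, with $s=m_t+1$ and $D=\deg L_{\mathcal H}$, then puts it in explicit form.
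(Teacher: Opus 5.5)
Your proposal is correct and follows the paper's route: you establish non-vanishing of $L_{\mathcal H}$ on $A=S^{E(B)}$ via the Nullstellensatz/existence-theorem argument, then apply Alon--F\"uredi on the uniform grid with $\#A_i=m_t+1$. Your Step~3 check that $\deg L_{\mathcal H}\le n m_t$ is a worthwhile addition the paper omits, since it ensures the minimization defining $\mathfrak m$ is over a nonempty set; the reduction modulo $\prod_{s\in S}(x_e-s)$ is unnecessary, because total degree is always at most the sum of the individual degrees.
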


\begin{IEEEproof}
By the existence theorem, the condition
\[
m\ge m_t\ge W_{\mathcal H}
\]
guarantees that there exists an assignment $x\in A$ such that
\[
L_{\mathcal H}(x)\neq 0.
\]
Hence $L_{\mathcal H}$ does not vanish on all points of $A$. Applying
the Alon--Füredi theorem to $L_{\mathcal H}$ over the grid
$A=S^{E(B)}$ gives
\[
\resizebox{0.98\columnwidth}{!}{$
\#\mathcal U_A(L_{\mathcal H})
\ge
\mathfrak m
\left(
\underbrace{m_t+1,\ldots,m_t+1}_{n};
n(m_t+1)-\deg(L_{\mathcal H})
\right).
$}
\]
Since every point of $\mathcal U_A(L_{\mathcal H})$ breaks all harmful
structures in $\mathcal H$, the same quantity is a lower bound on the
number of valid edge-spreading assignments.
\end{IEEEproof}

\begin{corollary}[Counting assignments eliminating all $4$-cycles]
\label{cor:C4_counting_compact}
Consider a SC-LDPC code constructed from a fully connected
$\gamma\times\kappa$ base graph with no parallel edges. Let
$
    S=\{a_0,a_1,\ldots,a_{m_t}\}$,
    $A=S^{\gamma\kappa},
$
where the elements of $S$ are distinct elements of a field $F$. Let
$\mathcal C_4$ denote the set of unordered $4$-cycle candidates in the
base graph. If
\[
    m\ge m_t\ge (\gamma-1)(\kappa-1),
\]
then the number of edge-spreading assignments in $A$ that eliminate all
$4$-cycles is at least
\[
\#\mathcal U_A(L_{\mathcal C_4})
\ge
\begin{cases}
(m_t+1)^{\gamma\kappa-q}, & r=0,\\[3pt]
(m_t+1-r)(m_t+1)^{\gamma\kappa-q-1}, & 1\le r<m_t,
\end{cases}
\]
where
\[
    D_4=\binom{\gamma}{2}\binom{\kappa}{2},
    \qquad
    D_4=qm_t+r,
    \qquad
    0\le r<m_t.
\]
Every such assignment yields a SC-LDPC construction with girth at
least six.
\end{corollary}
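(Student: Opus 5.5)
The plan is to specialize Theorem~\ref{thm:counting_from_existence} to $\mathcal H=\mathcal C_4$ and then evaluate the Alon--Füredi minimum explicitly with Lemma~\ref{lem:uniform_AF_bound}.

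First, define the elimination polynomial
\[
L_{\mathcal C_4}(\mathbf x)=\prod_{c\in\mathcal C_4} l_c(\mathbf x),
\]
where for the unordered $4$-cycle on rows $i_1\neq i_2$ and columns $j_1\neq j_2$,
\[
l_c(\mathbf x)=x_{i_1j_1}+x_{i_2j_2}-x_{i_1j_2}-x_{i_2j_1},
\]
as in Lemma~\ref{lemma_1}.

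Two facts about this polynomial are needed.

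\textbf{Degree.} Each factor is a nonzero linear form, since its four edge variables are distinct. The product therefore has total degree exactly $|\mathcal C_4|=\binom{\gamma}{2}\binom{\kappa}{2}=D_4$. The top-degree part is the product of the linear forms themselves, and it is nonzero because $F[\mathbf x]$ is an integral domain.

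\textbf{Load.} An edge $(i,j)$ lies in exactly $(\gamma-1)(\kappa-1)$ four-cycles: choose the other row and the other column. Hence $W_{\mathcal C_4}=(\gamma-1)(\kappa-1)$, as recorded in the girth-six corollary. Under $m\ge m_t\ge(\gamma-1)(\kappa-1)$, Corollary~\ref{cor:all_fundamental_cycles} (via Theorem~\ref{theorem:chs_load_bound}) shows that $L_{\mathcal C_4}$ does not vanish identically on $A=S^{\gamma\kappa}$. Theorem~\ref{thm:counting_from_existence} then gives
\[
\#\mathcal U_A(L_{\mathcal C_4})\ \ge\ \mathfrak m\bigl(m_t+1,\ldots,m_t+1;\ \gamma\kappa(m_t+1)-D_4\bigr).
\]

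Next, apply Lemma~\ref{lem:uniform_AF_bound} with $n=\gamma\kappa$, $s=m_t+1$ (so $s-1=m_t$) and $D=D_4$. Writing $D_4=qm_t+r$ with $0\le r<m_t$, the lemma yields $(m_t+1)^{\gamma\kappa-q}$ when $r=0$ and $(m_t+1-r)(m_t+1)^{\gamma\kappa-q-1}$ when $r>0$. This is exactly the stated bound.

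The one hypothesis to check, and the step I expect to need the most care, is $0\le D_4\le \gamma\kappa\, m_t$. Using $m_t\ge(\gamma-1)(\kappa-1)$,
\[
D_4=\frac{\gamma\kappa(\gamma-1)(\kappa-1)}{4}\le \frac{\gamma\kappa\, m_t}{4}\le \gamma\kappa\, m_t .
\]
Consequently $q\le\gamma\kappa/4$, so all exponents are nonnegative and the minimizer in Lemma~\ref{lem:uniform_AF_bound} is well defined. The degenerate case $m_t=0$ occurs only when $\gamma=1$ or $\kappa=1$. Then $\mathcal C_4=\emptyset$, $L_{\mathcal C_4}=1$, and every assignment is trivially counted; I would dispose of this case separately or exclude it, since the lemma requires $s\ge2$.

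Finally, every assignment in $\mathcal U_A(L_{\mathcal C_4})$ satisfies $l_c\neq0$ for all $c\in\mathcal C_4$. By Lemma~\ref{lemma_1}, no $4$-cycle candidate survives edge spreading. Since the Tanner graph is bipartite and thus has only even cycles, the coupled protograph has girth at least six.
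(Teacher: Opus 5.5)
Your proposal is correct and is essentially the paper's proof: you form $L_{\mathcal C_4}$ as the product of the $D_4$ linear $4$-cycle forms, use $W_{\mathcal C_4}=(\gamma-1)(\kappa-1)$ to guarantee that $L_{\mathcal C_4}$ does not vanish on all of $A$, and evaluate the Alon--F\"uredi minimum with Lemma~\ref{lem:uniform_AF_bound} at $n=\gamma\kappa$, $s=m_t+1$, $D=D_4$. You add two details the paper leaves implicit: the check that $D_4\le \gamma\kappa\,m_t/4$, so the lemma's hypothesis $D\le n(s-1)$ holds, and the degenerate case $m_t=0$; conversely, the paper states that $2$-cycles are ruled out by the absence of parallel edges, which your final ``only even cycles'' step uses without saying so.
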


\begin{IEEEproof}
A $4$-cycle candidate is uniquely determined by choosing two distinct
check-node indices and two distinct variable-node indices. Hence
\[
    |\mathcal C_4|
    =
    \binom{\gamma}{2}\binom{\kappa}{2}
    =
    D_4.
\]

For the $4$-cycle determined by rows $i_1,i_2$ and columns $j_1,j_2$,
define
\[
    \ell_{i_1,i_2;j_1,j_2}
    =
    x_{i_1j_1}+x_{i_2j_2}
    -
    x_{i_1j_2}-x_{i_2j_1}.
\]
The condition
\[
    \ell_{i_1,i_2;j_1,j_2}\neq 0
\]
is sufficient to destroy this $4$-cycle after edge spreading. Therefore
we define
\[
    L_{\mathcal C_4}
    =
    \prod_{c_4\in\mathcal C_4}\ell_{c_4}.
\]
Since each factor is linear,
\[
    \deg(L_{\mathcal C_4})
    =
    |\mathcal C_4|
    =
    D_4.
\]

For $4$-cycles, the existence parameter is
\[
    W_{\mathcal C_4}=(\gamma-1)(\kappa-1).
\]
Thus the condition
\[
    m\ge m_t\ge (\gamma-1)(\kappa-1)
\]
allows us to apply Theorem~\ref{thm:counting_from_existence} with
\[
    n=\gamma\kappa,
    \qquad
    s=m_t+1,
    \qquad
    D=D_4.
\]
Using Lemma~\ref{lem:uniform_AF_bound} and the decomposition
\[
    D_4=qm_t+r,
    \qquad
    0\le r<m_t,
\]
gives the stated lower bound. Since every assignment counted by
$\mathcal U_A(L_{\mathcal C_4})$ destroys all $4$-cycle candidates, the
resulting bipartite graph contains no $4$-cycles. As there are no
parallel edges, the corresponding SC-LDPC construction has girth at
least six.
\end{IEEEproof}

\begin{example}[$3\times 5$ fully connected base graph]
\label{ex:3by5_C4_counting}
Consider the fully connected $3\times 5$ base matrix.
Here $\gamma=3$ and $\kappa=5$, so the number of edges in the base graph is
$
n=\gamma\kappa=15.
$
The number of unordered $4$-cycle candidates is
\[
    D_4=\binom{3}{2}\binom{5}{2}
    =3\cdot 10
    =30.
\]

By Corollary~\ref{cor:C4_counting_compact}, a sufficient memory condition
for eliminating all $4$-cycles is
\[
    m\ge m_t\ge (\gamma-1)(\kappa-1)
    =2\cdot 4
    =8.
\]
Take
\[
    S=\{0,1,\ldots,8\},
    \qquad |S|=m_t+1=9.
\]
Since
\[
    D_4=30=3\cdot 8+6,
\]
we have
\[
    q=3,
    \qquad
    r=6.
\]
Therefore, the counting lower bound gives
\[
\#\mathcal U_A(L_{\mathcal C_4})
\ge
(m_t+1-r)(m_t+1)^{\gamma\kappa-q-1}
=
(9-6)9^{15-3-1}.
\]
Equivalently,
\[
\#\mathcal U_A(L_{\mathcal C_4})
\ge
3\cdot 9^{11}
=
94,143,178,827.
\]
Since the total number of edge-spreading assignments over $S$ is
\[
    |A|=9^{15}
    =
    205,891,132,094,649,
\]
the lower bound also implies that a uniformly random assignment over
$S^{15}$ eliminates all $4$-cycles with probability at least
\[
    \frac{3\cdot 9^{11}}{9^{15}}
    =
    \frac{1}{2187}.
\]

\end{example}

\medskip
\noindent\textbf{Comparison with the CLLL/MT bounds at $Z=1$.}
It is also useful to compare the present protograph-level
bounds with the CLLL/MT bounds in~\cite{r33}
after specializing the lifting degree to $Z=1$. In this case,
the lifting variables carry no additional freedom, so the
comparison reduces to a pure edge-spreading problem. For
4-cycle elimination and fixed $\gamma$, the CLLL condition
requires
\[
    m_{\rm CLLL} \sim C_{\rm CLLL}(\gamma)\kappa,
    \qquad
    \frac{C_{\rm CLLL}(\gamma)}{\gamma-1}
    =
    \begin{cases}
    2e, & \gamma=3,\\
    20e/9, & \gamma=4,\\
    512/81, & \gamma\ge 5,
    \end{cases}
\]
whereas Corollary~2 gives the smaller deterministic condition
$m\ge(\gamma-1)(\kappa-1)$. The MT-output condition is also
larger, with
\[
    m_{\rm MT}\sim \frac{4e(2\gamma-3)}{3}\kappa .
\]
For instance, for the $3\times5$ fully connected base graph,
the present bound gives $m=8$, while the CLLL and MT
specializations require $m=35$ and $m=37$, respectively.
The counting bounds show the same behavior for feasible
edge-spreading assignments: at $m=8$, Corollary~4 gives
\[
    \#U_A(L_{C_4})\ge 3\cdot 9^{11}
    \approx 9.41\times 10^{10},
\]
whereas the CLLL feasible-solution bound at its minimum
admissible memory $m=35$ gives
\[
    36^{15}\left(1-\frac{2}{20}\right)^{300}
    \approx 4.14\times 10^9 .
\]
At the same memory $m=35$, the present Alon--F\"uredi
bound becomes
\[
    \#U_A(L_{C_4})\ge 6\cdot 36^{14}
    \approx 3.68\times 10^{22}.
\]
The MT-output diversity bound is of a different algorithmic
nature; at $m=37$ it gives
\[
    \frac{38^{15}}{\exp(3/2)}
    \approx 1.11\times 10^{23},
\]
which is comparable to the Alon--F\"uredi value
$8\cdot38^{14}\approx 1.05\times10^{23}$ at the same memory,
but it requires the larger MT memory condition.
\section{Conclusion}

This paper developed an algebraic-combinatorial framework for
protograph-level edge-spreading designs of high-memory SC-LDPC
codes. By representing cycle activation conditions as polynomial
vanishing constraints over finite grids, we obtained sufficient
memory conditions for eliminating prescribed cycle-induced harmful
structures. For fully connected $(\gamma,\kappa)$ base graphs, the
bounds explicitly characterize the memory needed to destroy all
$4$-cycles and all $4$- and $6$-cycles, and are asymptotically tight
up to a constant factor for fixed $\gamma$. We also used the
Alon--F{\"u}redi theorem to lower-bound the number of feasible
edge-spreading assignments, giving a quantitative measure of the
available design space. The framework is nonconstructive; efficient
algorithms approaching these bounds and extensions to larger
trapping-set or absorbing-set families remain important directions
for future work.


\bibliographystyle{IEEEtran}

\bibliography{IEEEabrv,ref}

\end{document}